\newtheorem{theorem}{Theorem}
\newtheorem{lemma}[theorem]{Lemma}
\newtheorem{proposition}{Proposition}
\newenvironment{proof}[1][Proof]{\noindent\textbf{#1.} }{\ \rule{0.5em}{0.5em}}
\def\@biblabel#1{\hspace*{-\labelsep}}
\DeclareMathOperator*{\argmax}{arg\,max}
\begin{document}
\epstopdfsetup{outdir=./}
\title{Efficient Network Structures with Separable Heterogeneous Connection Costs\footnote{This is the pre-print version of the following article: \textit{B. Heydari, M. Mosleh, and K. Dalili, “Efficient network structures with separable heterogeneous connection costs,” Economic
Letters Journal, vol. 134, no. September, pp. 82–85, 2015}., which has been published in final form at \href{http://dx.doi.org/10.1016/j.econlet.2015.06.014}{doi:10.1016/j.econlet.2015.06.014.} }}

\author{ Babak Heydari\textsuperscript{a} \thanks{Address: 1 Castle Point Terrace,
Hoboken, NJ 07030, USA, e-mail:
\textit{babak.heydari@stevens.edu}, web: http://web.stevens.edu/cens/}, Mohsen Mosleh\textsuperscript{a}, Kia Dalili\textsuperscript{b} \medskip\\{\normalsize \textsuperscript{a}School of Systems and Enterprises, Stevens Institute of Technology, Hoboken, NJ 07030}\medskip\\{\normalsize \textsuperscript{b}Facebook Inc, New York, NY 10003}}


\maketitle

\sloppy

\onehalfspacing

\begin{abstract}

We introduce a heterogeneous connection model for network formation to capture the effect of cost heterogeneity on the structure of efficient networks. In the proposed model, connection costs are assumed to be separable, which means the total connection cost for each agent is uniquely proportional to its degree. For these sets of networks, we provide the analytical solution for the efficient network as a function of connection costs and benefits. We show that the efficient network exhibits a core-periphery structure. Moreover, for a given link density, we find a lower bound for the clustering coefficient of the efficient network, and compare it to that of the Erd\H{o}s-R\'enyi random networks.

\end{abstract}

\strut

\textbf{Keywords:} Complex Networks, 
Connection Model, Efficient networks, Distance-based utility, Core-periphery, Pairwise Stability

\strut

\textbf{JEL Classification Numbers:} D85

\pagebreak
\doublespacing 

\section{Introduction}
\label{intro} 

Network formation models are increasingly being used in a variety of economic contexts and other multi-agent systems. These models often study the structural conditions of efficiency, network social
welfare, and stability, which is a measure of individual incentives to form, keep or sever links (\cite{jackson2008social}).

We build our model based on the \textit{Connection model} proposed by \cite{jackson1996strategic}\footnote{\cite{jackson1996strategic} developed their model based on the notion of pairwise stability or \textit{two-sided link formation} where a link is formed upon the ``mutual consent" of two agents. There is also another line of literature from \cite{bala1997self,goyal1993sustainable} that studies one-sided and non-cooperative link formation, where agents unilaterally decide to form the links with another agent.} in which agents can benefit from both direct and indirect connections, but only pay for their direct connections. Benefits of indirect connections generally decrease with distance. \cite{jackson1996strategic} demonstrated that, for the homogeneous case,  the efficient network can only take one of three forms: a complete graph, a star or an empty graph depending on connection cost and benefits. Several models have been proposed to introduce heterogeneity into the connection model, (see for instance: \cite{galeotti2006network,jackson2005economics,persitzcore,vandenbossche2013network}); the focus has mainly been on conditions for stability, with few references to efficiency. Finding general analytical solutions for the efficient networks with heterogeneous costs can be intractable, 
see for example \cite{carayol2009knowledge}.



Here, we focus on finding efficient networks for a particular model of cost heterogeneity  that we refer to as the \textit{separable connection cost} model, in which shares of nodes' costs from each connection are heterogeneous, yet fixed and independent of to whom they connect. This is motivated by networks in which heterogeneous agents are each endowed with some resources (time, energy, bandwidth, etc) and the total resource needed to establish and maintain connections for each node can be approximated to be proportional to its degree. We further assume homogeneous benefits decaying with distance. We provide an exact analytical solution for efficient connectivity structures under these assumptions and show that such networks have at most one connected component, exhibit a core-periphery structure and have diameters no larger than two. We further provide a lower bound for the clustering coefficient and discuss the pairwise stability implications of efficient networks.

\section{Model}
\label{model}
For a finite set of agents $N=\{1, \dots, b\}$, let $b:\{1, ... ,n-1\} \rightarrow \mathbb{R}$ represent the benefit that an agent receives from (direct or indirect) connections to other agents as function of the distance between them in a graph. Following \cite{jackson1996strategic}, the (distance-based) utility function of each node, $u_i(g)$, in a graph $g$ and the total utility of the graph, $U(g)$, are as follows:
\begin{equation} 
\label{utility_connection_model}
\begin{split}
u_i(g)&=\sum_{j\neq i: j\in  N_i^{n-1}(g)} b(d_{ij}(g))-\sum_{j\neq i: j\in N_i(g) } c_{ij}\\
U(g)&=\sum_{i=1}^{n}u_i(g)
\end{split}
\end{equation} 

where $N_i(g)$ is the set of nodes to which $i$ is linked, and $N^k_i(g)$ is the set of nodes that are path connected to $i$ by a distance no larger than $k$. $d_{ij}(g)$ is the distance between $i$ and $j$, $c_{ij}$ is the cost that node $i$ pays for connecting to $j$, and $b$ is the benefit that node $i$ receives from a connection with another node in the network. We assume that $b(k)>b(k+1)>0$ for any integer $k \geq 1$.

Let complete graph $g^N$ denotes the set of all subsets of $N$ of size 2. The network $\tilde{g}$ is 
efficient, if $U(\tilde{g}) \geq U(g^\prime)$ for all $g^\prime \subset g^N$, which indicates that $\tilde{g}=\argmax_g\sum_{i=1}^{n}u_i(g)$. Assuming the \textit{separable} cost model as introduced earlier, connection costs in Equation~\ref{utility_connection_model}, for a link between $i$ and $j$ can be written as $c_{ij} = c_i$, $c_{ji} = c_j$.
We then introduce a connection cost vector, $C$, and  without loss of generality rename nodes such that $c_1<c_2<...<c_n$. 
%


\begin{figure*}[t]
\centering
\includegraphics[width=2in]{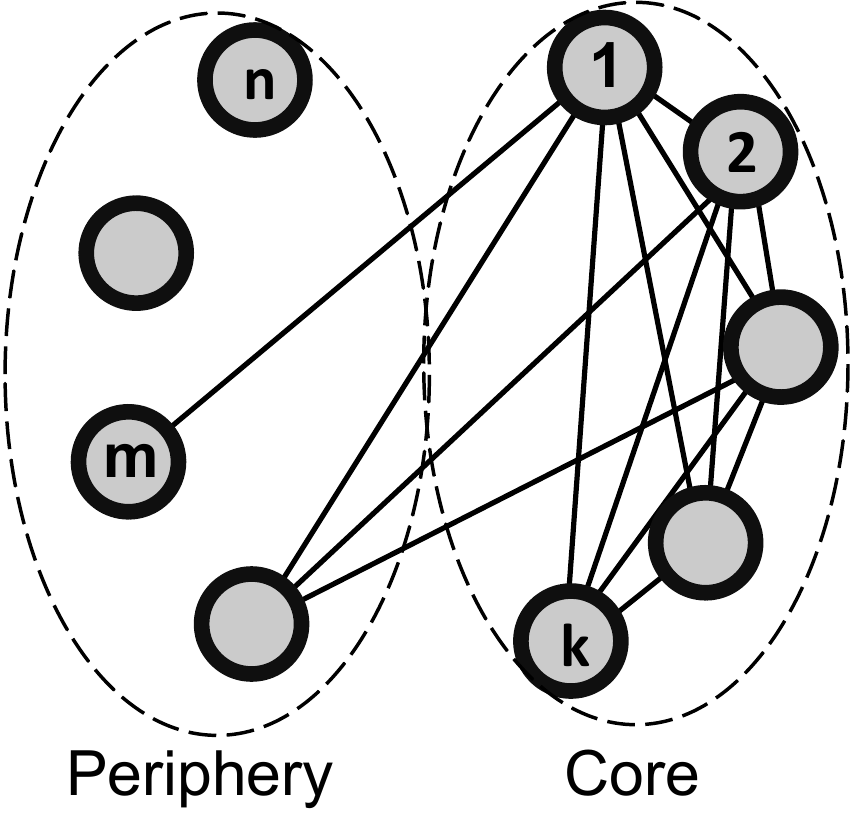}
\caption{Structure of the efficient network with separable heterogeneous connection model. The connected component has a \textit{generalized star} structure. $n$ is the number of nodes in the network, $m$ is the number of nodes in the connected component, and $k$ is the number of nodes in the core (complete subgraph).  Nodes are sorted according to their costs i.e. $c_1 \leq c_2 \leq \cdots \leq c_n$. }
\label{figure1}
\end{figure*}

\subsection{Efficient Structures under Separable Connection Costs}
In Lemma~\ref{lemma2}, we determine the efficient structure for a connected component. Then in Proposition~\ref{personal_cost_prop}, we determine the structure of the efficient network in general.



\begin{lemma}
\label{lemma2}
If the efficient network with separable cost model is connected then it has a ``generalized star" structure with the following characteristics: (a) All nodes are connected to node 1 ( the node with minimum connection cost). (b) Nodes $i$ and $j$ ($i,j\neq 1$) are connected iff $b(1)-b(2)>.5(c_i+c_j)$.

\end{lemma}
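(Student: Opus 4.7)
The plan is to establish (a) first and then derive (b) from (a). Throughout, I will work with the decomposition $U(g) = 2B(g) - C(g)$, where $B(g) = \sum_{\{a,b\}} b(d_{ab}(g))$ summed over pairs in the same component and $C(g) = \sum_{e \in g} (c_{a(e)}+c_{b(e)})$.

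For (b), once (a) is established, every node is linked to $1$, so the diameter of $g$ is at most $2$: any two non-central nodes $i,j\neq 1$ are at distance $1$ (if directly connected) or $2$ (via node $1$). Adding or removing the non-central edge $(i,j)$ therefore changes only the single distance $d_{ij}$, since every other pairwise distance is already at most $2$ and cannot be shortened by this edge. The marginal contribution of $(i,j)$ to total utility is exactly $2[b(1)-b(2)]-(c_i+c_j)$, and efficiency forces the edge to be present iff this is positive, giving the stated condition.

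For (a), I argue by contradiction. Assume $g$ is efficient and connected but some $i\neq 1$ has $(1,i)\notin g$. Because $g$ is connected there is a shortest path from $i$ to $1$; let $j$ be $i$'s neighbor on this path and consider the rewiring $g' = g - (i,j) + (1,i)$. The cost change is $c_1-c_j$, which is strictly negative because $c_1$ is the unique minimum. To produce the contradiction I need the benefit change to be non-negative. In the case $d_{1,i}(g)=2$ one has $(1,j)\in g$, and a short distance-matrix analysis shows: the gain on pair $(1,i)$ (distance $2\to 1$) exactly cancels the loss on pair $(i,j)$ (distance $1\to 2$ along $i\to 1\to j$); for any other pair $(a,b)$, a shortest path that used the removed edge $(i,j)$ can be rerouted through $(1,i)$ and $(1,j)$ at no extra length, while paths not using $(i,j)$ survive, so $d_{ab}(g')\le d_{ab}(g)$. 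Hence $B(g')\ge B(g)$ and $U(g')>U(g)$, giving the contradiction.

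The main obstacle is the case $d_{1,i}(g)\ge 3$, where $(1,j)\notin g$ and the local cancellation above fails. Here I would iterate the rewiring, applying it to a node on the shortest path from $i$ to $1$ at each step; each step strictly reduces $C$ and eventually either produces a strict improvement in $U$ or reduces to the base case $d_{1,\cdot}(g)=2$. Alternatively, one can bypass the iteration by comparing $g$ directly with the candidate optimum $g^{\dagger}$---the star at node $1$ augmented by the non-central edges prescribed in (b)---and exhibit $U(g^{\dagger})>U(g)$ via a global exchange argument, using that the star at $1$ is simultaneously the minimum-weight spanning tree under separable costs and minimizes the sum of pairwise distances among spanning trees on $n$ nodes.
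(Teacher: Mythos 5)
Part (b) of your argument is fine and is exactly the paper's: once every node is adjacent to node $1$, the edge $(i,j)$ only toggles $d_{ij}$ between $1$ and $2$, so its marginal value is $2[b(1)-b(2)]-(c_i+c_j)$. The gap is in part (a). Your key claim --- that after the rewiring $g'=g-(i,j)+(1,i)$ any shortest path that used $(i,j)$ ``can be rerouted through $(1,i)$ and $(1,j)$ at no extra length'' --- is false: replacing the hop $i\!-\!j$ by $i\!-\!1\!-\!j$ costs one extra edge, so other distances can strictly increase. Concretely, take $g$ with edges $(1,j),(j,i),(j,q),(i,s)$, so $d_{1i}(g)=2$. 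After your rewiring, $d_{js}$ goes from $2$ to $3$, $d_{iq}$ from $2$ to $3$, $d_{qs}$ from $3$ to $4$, while only $d_{1i}$ and $d_{1s}$ improve; the net pairwise-benefit change is $-[b(2)-b(3)]-[b(3)-b(4)]<0$, so if $c_j-c_1$ is small the rewiring strictly \emph{lowers} total utility. Hence the local move you rely on is not an improving deviation even in your ``base case'' $d_{1i}=2$, and the contradiction does not follow. The case $d_{1i}\ge 3$ is only gestured at (iteration, or a spanning-tree comparison that does not apply since the efficient $g$ need not be a tree), so that part is a sketch rather than a proof.

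The paper closes exactly this hole by doing the rewiring \emph{globally} rather than one edge at a time: let $M$ be the set of all nodes not adjacent to $1$, pick for each $v_k\in M$ one incident link $l_k=(v_k,w_k)$ (necessarily $w_k\neq 1$), delete all the $l_k$ and add all links $(v_k,1)$ simultaneously. Then every node is adjacent to $1$, so the new network has diameter at most $2$; the number of links is unchanged, so the number of distance-$1$ pairs is preserved, and every remaining pair contributes $b(2)$, which is at least its contribution in $g$ (where its distance was $\ge 2$). Thus total benefit weakly increases, while total cost strictly drops by $\sum_k (c_{w_k}-c_1)>0$, contradicting efficiency without ever needing pairwise distance monotonicity for a single-edge swap. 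If you want to keep your one-edge-at-a-time strategy, you would have to control the benefit losses of third-party pairs, which is precisely what fails; the simultaneous rewiring (or any argument that first forces the modified network to have diameter $2$ before comparing benefits) is the missing idea.
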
 
\begin{proof}
Let $N$ represents nodes in the connected network. If there exists a subset of nodes $M=\{v_1, ... ,v_m\}$ $(M \subset N)$ that are not connected to node 1, we show that the network is not efficient. Since $N$ is connected, there exists a set  of links, $L=\{l_1, ..., l_m\}$  where $l_i$ is adjacent to $v_i$\footnote{For any subset $M=\{v_1,\cdots , v_m\}$ of a connected network $N$ ($M\not\equiv N$), we can show that there are links $L=\{l_1,\cdots,l_m\}$ in $N$ such that $v_i$ is adjacent to $l_i$.}. Suppose $l_i$ connects $v_i$ to $w_i$ ( $w_i\neq 1$ by definition). Now, if we remove all $l_i$s and connect all $v_i$s to node 1, we have reduced the total connection cost of the network by $ m c_1 - \sum\limits_{k=1}^{m} c_{v_k} < 0$.

Now, to address the benefits, note that we have not changed the number of links; therefore direct benefits remain the same. Furthermore, the diameter of the new network is 2. Therefore every distance that is not 1 is capped at 2, making the total benefit larger than that of the original network. This results in at improvement in the total utility, indicating that the original network was not efficient.

Furthermore, having established that the maximum distance in the efficient network is no larger than two, every node $i$ and $j$ ($i,j \neq 1$) are connected iif $b(1)-b(2)>.5(c_i+c_j)$.


\end{proof}

Proposition~\ref{personal_cost_prop} determines the structure of the efficient network and shows that the efficient network is a spectrum of solutions. 

\begin{proposition}
\label{personal_cost_prop}
In the connection model, for a finite set of agents, $N= \{1,..,n\}$,  if $c_{ij} = c_i$ for all $i,j \in N$, where $c_i \in C=\{c_1, c_2, ... , c_n\} $ and assuming, $c_1 < c_2 < \cdots < c_n$, the structure of the efficient network is as follows: Let $m$ be the largest integer between $1$ and $n$ such that $2b(1)+2(m-2)b(2)>(c_m+c_1)$. If $i > m$, then $i$ is isolated.  If $i  \leq m$, then there is exactly one link between $i$ and $1$; also there is one link between $i$ and $j$ ($1 < i,j \leq m$) iff $b(1)-b(2)>.5(c_i+c_j)$.

\end{proposition}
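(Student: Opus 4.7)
The plan is to combine Lemma~\ref{lemma2} with two swap/relabel arguments and a marginal calculation.

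First (uniqueness of the non-trivial component): the argument of Lemma~\ref{lemma2} applies verbatim to any connected component, so each non-trivial component of the efficient network must be a generalized star around its cheapest node. I would rule out node~$1$ being isolated while a non-trivial component $B$ with center $b^\star$ exists by the label swap $1\leftrightarrow b^\star$, which produces an isomorphic graph (same distances, same benefits) whose $|B|-1$ spokes each cost $c_1$ in place of $c_{b^\star}$, a strict improvement. I would then rule out two non-trivial components $A\ni 1$ and $B\not\ni 1$ (with $B$'s center $b^\star$) via the modification ``replace each spoke $b^\star v$, $v\in B\setminus\{b^\star\}$, by the spoke $1v$, and add $1b^\star$.'' A direct accounting gives a benefit gain of $2b(1)+2(|A||B|-1)b(2)$ (the new pair $(1,b^\star)$ at distance $1$; the newly connected inter-component pairs at distance $\le 2$; net of the $|B|-1$ intra-$B$ pairs whose distance rises from $1$ to $2$) and a cost change of $|B|(c_1-c_{b^\star})+2c_{b^\star}\le 2c_1$. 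Applying efficiency to the removal of a single spoke in $A$ yields $b(1)+(|A|-2)b(2)>c_1$; since $|A||B|-1\ge|A|-2$, this makes the net change strictly positive, contradicting efficiency.

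Second (the node set is $\{1,\ldots,m\}$): by the first step and Lemma~\ref{lemma2}, the (possibly empty) non-trivial component is a generalized star centered at node~$1$ obeying the peripheral rule $b(1)-b(2)>.5(c_i+c_j)$. A second label swap pins down the node set: if $i$ is in the component and $j\notin$ component with $c_j<c_i$, relabeling $i\leftrightarrow j$ yields an isomorphic graph with strictly smaller total cost, a contradiction. Hence the component is $\{1,\ldots,m\}$ for some $m$.

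Third (the cutoff $m$): the marginal change from enlarging the star from $\{1,\ldots,m-1\}$ to $\{1,\ldots,m\}$, together with the optimal peripheral edges involving node $m$, is
\[
\Delta_m \;=\; 2b(1)+2(m-2)b(2)-c_1-c_m \;+\; \sum_{i=2}^{m-1}\max\!\bigl(0,\;2(b(1)-b(2))-c_i-c_m\bigr).
\]
The key observation is that $2(b(1)-b(2))>c_i+c_m$ (the condition for peripheral $im$ to appear) combined with $c_i>c_1$ already forces $2b(1)+2(m-2)b(2)>c_1+c_m$, the proposition's basic condition. Contrapositively, whenever the basic condition fails every $\max$-term vanishes and $\Delta_m\le 0$, while whenever it holds $\Delta_m>0$. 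Hence $m$ is precisely the largest integer satisfying the basic condition. The main obstacle is the bookkeeping in the first step: tracking the loss of the $|B|-1$ intra-$B$ distance-$1$ pairs that become distance-$2$ via node~$1$ against the newly created inter-component pairs, and extracting the necessary $b(1)+(|A|-2)b(2)>c_1$ from a single-spoke efficiency check (with a side case when the chosen spoke's endpoint also has peripheral edges).
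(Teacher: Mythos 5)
Your steps 1 and 2 are sound and in fact cleaner than the paper's treatment: the paper merges two components by rewiring all links of the costlier center onto the cheaper one and compares aggregate benefit formulas, and it obtains the ``initial segment'' property from monotonicity of its quantity $A_i$, whereas you get both from relabeling/swap arguments plus one single-spoke efficiency check (your flagged side case, where the spoke's endpoint has peripheral edges, is easily closed since Lemma~\ref{lemma2}(b) then gives $b(1)-b(2)>\tfrac12(c_i+c_j)>c_1$ directly). The genuine gap is in step 3. From ``$\Delta_j>0$ iff the basic condition holds at $j$'' you conclude that the efficient component is $\{1,\dots,m\}$ with $m$ the \emph{largest} index satisfying the basic condition. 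That inference is invalid: the efficient size is the maximizer of the cumulative sum $U(s)=\sum_{j=2}^{s}\Delta_j$, and since $f(j)=2b(1)+2(j-2)b(2)-c_1-c_j$ need not be monotone in $j$ (cost gaps $c_{j+1}-c_j$ can exceed $2b(2)$ for some $j$ and not others), a positive marginal at $m$ can be preceded by negative marginals that it does not compensate. Knowing the signs of the increments does not locate the argmax of their partial sums.

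Concretely, take $n=4$, $b(1)=1$, $b(2)=0.9$, $c=(0.1,\,5.2,\,5.3,\,5.4)$. Then $\Delta_2=-3.3$, $\Delta_3=-1.6$, $\Delta_4=+0.1$ (all max-terms vanish), so the largest index with a positive marginal is $4$, yet the star on $\{1,2,3,4\}$ has total utility $-4.8$, while the empty network has utility $0$ and is in fact efficient (any link costs at least $c_1+c_2=5.3$ against a per-pair benefit of at most $2b(1)=2$, and a quick case check over component sizes rules out every nonempty graph). So the final ``hence'' of your step 3 cannot be justified as stated; what your computation actually yields is only that no node $i$ with $f(i)\le 0$ can be the last node added profitably, and that the efficient $s$ satisfies $U(s)\ge U(s')$ for all $s'$, which forces a comparison of cumulative sums across non-adjacent sizes (as the example shows, the ``largest index'' rule and the argmax genuinely differ). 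To close the argument you would need either an additional monotonicity hypothesis making the marginals single-crossing, or a direct argument bounding $\sum_{j=s+1}^{m}\Delta_j$ from below, neither of which is supplied. (For what it is worth, the paper's own proof elides the same difficulty: its criterion $A_i\ge 0$ is stated with $k$ equal to the realized component size, and the claim that ``the smallest $i$ with $A_i<0$ provides the size'' performs the same local-to-global leap; so this is a real gap in your write-up, but not one you could have repaired simply by following the published argument.)
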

\begin{proof}

First, we show there is at most one connected component in the efficient network. Next, we find the condition for each node to be in the connected component, which has the generalized star structure according to Lemma~\ref{lemma2}.

Assume that the efficient network has more than one (e.g. two) connected components with $(m_i,\ell_i)$ being respectively the number of nodes and links in connected component $i$. According to Lemma 1, each connected component has a \textit{generalized star} structure.
The total benefit of each component is  $B_1=2\ell_1 b(1)+(m_1(m_1-1)-2\ell_1)b(2)$  and $B_2=2\ell_2 b(1)+(m_2(m_2-1)-2\ell_2)b(2)$ respectively. Suppose $h$ and $h^\prime$ are nodes with minimum costs in component 1 and 2 respectively and without loss of generality $c_h < c_{h^\prime}$. If we disconnect all links connected to $h^\prime$ and connect them directly to $h$, total cost decreases by $(c_{h^\prime}-c_h)$ per link. This also results in the total benefit $B=(\ell_1+\ell_2) b(1)+( (m_1+m_2)(m_1+m_2-1)-2(\ell_1+\ell_2))b(2)$, which is strictly greater than $(B_1+B_2)$. 

To determine which nodes belong to the connected component, $G_C$, in the efficient network, we define, for node $i$, $A_i \triangleq 2b(1)+2(k-2)b(2) - c_1- c_i$ where $k$ is the number of nodes in the connected component $G_C$. We show that $i$ is in $G_C$ \textit{iff} $A_i \geq 0$. First,  $A_i  > 0$ is the sufficient condition for $i$ to be in $G_C$. This is because connecting $i$ to node $1$ increases the total utility by exactly $A_i$, as the diameter of $G_C$ is at most 2, according to Lemma 1. Also if $A_i <0$, then $i$ will be isolated so $A_i \geq 0$ is also the necessary condition. This is because $i$ cannot be only connected to 1  since $A_i <0$, so the only way for $i$ to be connected is by having more than one link. From Lemma~\ref{lemma2}, for $i$ to have a link to  $j \neq 1$, we must have $c_i+c_j < 2b(1) - 2b(2)$. But: $c_i+c_j > c_1+c_i > 2b(1) + 2(k-2)b(2) > 2b(1)-2b(2)$, where we use the fact that $c_j>c_1$ and $A_i <0$, so $i$ cannot have more than one connection either, thus $i$ will be isolated. Note that $A_i  > 0$ also means $A_j > 0$ for all $j <i$ since $c_j < c_i$, thus all lower cost nodes will also be in $G_C$, so the smallest $i$ for which $A_i < 0$ provides the size of the connected component in the efficient network.

\end{proof}

A typical structure for the efficient network with heterogeneous separable cost model is illustrated in Figure~\ref{figure1}.

\subsection{Characteristics of Networks with Separable Cost Model}

\subsubsection{Core-Periphery structure}

We show that the efficient network has a Core-periphery structure, a widely observed structure in various social and economic networks (i.e. see for example \cite{zhang2014identification, rombach2014core}). We adopt the formal definition from \cite{bramoulle2007anti}, which states that a graph $g$ has a core-periphery structure when agents can be partitioned
into two sets, the \textit{core} $C$ and the \textit{periphery} $P$, such that all partnerships are formed within the core
and no partnership is formed within the periphery. For an efficient network, let $k$ be the largest integer between $2$ and $n$ such that $b(1)-b(2)>.5(c_{k-1}+c_{k})$. The efficient network can be partitioned into a set $C=\{1, \dots , k\}$, which forms a complete subgraph and a set $P=\{k+1, \dots , n\}$ which can only have connections to the complete subgraph. If $b(1)-b(2)>(c_{k-1}+c_{k})$ then $k$ and $k-1$ are connected and there is also a link between every node $i , j (i,j \leq k$ and $c_i,c_j \leq c_k)$, which forms a complete subgraph. Similarly, we can show that for every $i \in P$, which is connected to a node $j$, $c_j\leq c_k$ and $j\not \in P$.

\subsubsection{Clustering coefficient}

Unlike efficient networks with homogeneous cost model whose clustering coefficients are either one (complete graph) or zero (star structure or empty graph), the clustering coefficient of the efficient network with heterogeneous costs covers a wide range. 
To find a lower bound, we find the minimum global clustering coefficient of all the efficient networks with a given link density and various connection cost values. To this end, we construct a network that does not break the efficiency condition, provided in the previous section, while producing the minimum possible number of triangles for the given density as follows: Starting from an empty graph and node $k=1$, we only establish links from node $k$ to every node $i>k$ in ascending order. We repeat this process for $k=1, \dots, p+1$ until the total number of links reaches $\ell$. $p$ is the largest integer such that $\ell=\sum_{k=1}^{p}(n-k)+J$, where $J$ is the residual number of links in the $p+1$ round of incomplete iteration $(J<n-p+1)$. At the $k$-th iteration, the total number of connected triplets is increased by ${n-k \choose 2}+2(k-1)(n-k)+{J \choose 2}+2Jp$. The first term in this equation results from the fact that node $k$ will act as a local hub with $(n-k)$ new links who provide ${n-k \choose 2}$ new triplets. The second term is the additional triplets resulting from new triangles formed in the $k$-th round since there are $(k-1)(n-k)$ triangles formed in round $k$ which add 3 triplets, one of which have already been counted in previous iterations. The remaining two terms follow a similar logic for the residual links ($J$). Similarly, the total number of triangles formed at $k$-th iteration is increased by $(k-1)(n-k)+pJ$. Therefore, for a given number of links ($\ell$), we have:

\begin{equation} 
\label{clustering}
\begin{split}
C^{min}_{eff}(g,\ell)&=\frac{3\times\mbox{number of triangles}}{\mbox{number of connected triplets of nodes}}\\
&=\frac{3\times\{\sum_{k=1}^{p}(k-1)(n-k)+pJ\} }{\sum_{k=1}^{p}\{{n-k \choose 2} +2(k-1)(n-k)\}+{J \choose 2}+2Jp}\\
\end{split}
\end{equation} 
Compared to an Erd\H{o}s-R\'enyi (ER) random network with the same link density, some algebra reveals that the clustering clustering coefficient of the efficient heterogeneous network exceeds that of an ER network when then number of nodes is greater than 10 and the link density is larger than $\frac{4(2n-5)}{n(n-1)}$. For sufficiently large networks, the condition simplifies to having link density greater than $\frac{8}{n}$.

\section{Conclusion} 
Heterogeneous yet separable connection costs cover important classes of real networks.
We showed that efficient structures for such networks can be solved exactly and have diameter no larger than two; we also discussed the transitivity and core-periphery nature of such networks. 
Although benefits are still assumed to be homogeneous, one can easily take into account heterogeneity of direct benefits, as long as the separability assumption is maintained, i.e. cost and direct benefit terms appear together in all analysis and cost terms can capture heterogeneity of direct benefits by embedding them as an off-set in the fixed costs of nodes.

We mainly focused on the notion of efficiency for the networks with separable connection cost model. Further studies can investigate stability for the proposed model, and find conditions under which stable and efficient structures coincide. Moreover, there are cases where the separable cost assumption does not hold, for example when the link cost is the function of how \textit{similar} the two nodes are. These cases in general can be intractable and approximate methods such as \textit{island models} as discussed in \cite{jackson2005economics} can be used.



\section*{Acknowledgements} 
This work was supported by DARPA Contract NNA11AB35C. The authors are grateful to Peter Ludlow (Stevens) and Pedram Heydari (UCSD) for insightful comments. 
\singlespacing


\footnotesize

\normalsize

\bibliographystyle{apalike}

\bibliography{workingpaper}
 
%
%
%
%
%
%

\end{document}